\begin{document}
\title{Positional queuetions}
%
%
\author{Vladimir Yankovskiy\inst{1, 2}\orcidID{0000-0002-6013-6319}}
\authorrunning{V. Yankovskiy}
%
\institute{High School of Economics, Doctoral School of Mathematics \and Yandex \email{vladimir.vank0vskiy@yandex.com}}
\maketitle              
\begin{abstract}
In this work, we consider properties of VCG and GSP auctions positions in queues.

\keywords{Mechanism design  \and Sponcored search \and Position auctions.}
\end{abstract}
\section{Introduction}

In recent years, online marketplaces have become increasingly popular as a platform for buying and selling goods. With the rise of marketplaces, there has been a growing interest in sponsored search and position auctions as a means of generating revenue. Sponsored search is a form of advertising where advertisers bid for the right to have their ads displayed alongside search results. Position auctions, on the other hand, determine the order in which ads are displayed in search results. 

In his seminal paper \cite{varian2007position} Hal Varian discusses the use of position auctions in sponsored search and their effectiveness in generating revenue. Position auctions have been extensively studied in the context of search engines (\cite{nisan2007algorithmic}, \cite{roughgarden2016twenty}, \cite{athey2011position}, \cite{varian2009online}, \cite{ashlagi2010position}, \cite{edelman2010optimal}, \cite{aggarwal2008sponsored}, \cite{milgrom2007simplified}, \cite{borgers2013equilibrium}, \cite{jansen2008sponsored}, \cite{deng2021towards}, \cite{bayir2019genie}), and a bit less extensively in the context of marketplaces (\cite{li2023optimally}, \cite{watts2021fairness}, \cite{ferreira2022learning}, \cite{derakhshan2022product}, \cite{dash2021umpire}, \cite{etro2021product}, \cite{yankovskiy2024position}) 

 In our paper a position auction fitted to a completely different task: the task of auctioning positions in a queue. 

For it we get upper and lower bounds on possible revenue that can be estimated beforehand. The main result of this work is such bounds for VCG and GSP auctions.

Our analysis is done in the same way as in our previous article \cite{yankovskiy2024position}.

\section{General setting of position queuetions}

Suppose we have $N$ participants waiting in a queue for some service. To each participant there correspond two numbers: the time which it takes to serve them $t_i > 0$ and their value for time $w_i > 0$ (known to the participants but not to the organizer). 

If the participants are distributed in a queue according to some $\sigma(i)$, then $\sigma(i)$-th participant will gain $-w_\sigma(i) \Sigma_{j < i} t_\sigma(j)$ (that is their value of the time they will wait in queue).

Within this setting we will consider two mechanisms and the games that correspond to them. First, let's recall the definition of a mechanism.

\begin{definition}[\cite{nisan2007algorithmic}]
A mechanism for $n$ players is given by 
\begin{itemize}
    \item players type spaces $T_1, ..., T_N$
    \item players action spaces $X_1, ..., X_N$
    \item alternative set $A$
    \item players valuation functions $V_i: T_i \times A \to \mathbb{R}$
    \item outcome function $a: X_1 \times ... \times X_N \to A$
    \item players payment functions $p_i: X_1, ..., X_N \mathbb{R}$
\end{itemize}
The game induced by the mechanism is given by using $T_i$ as type spaces, $X_i$ as action spaces and utilities
$$u_i(t_i, x_1, ..., x_N) = V_i(t_i, a(x_1, ..., x_N)) - p_i(x_1, ..., x_N)$$
\end{definition}

\section{Optimal sorting of customers}

\begin{lemma}
The total loss of the participants reaches its minimum when the participants ars sorted in non-ascending order by
$$v_i := \frac{w_i}{t_i}$$
\end{lemma}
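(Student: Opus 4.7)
The plan is to recognize this as the classical single-machine weighted completion-time scheduling problem, for which Smith's rule (sort by $w_i/t_i$ in non-increasing order) is known to be optimal, and to prove it by a standard adjacent exchange argument.

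First, I would write out the total loss explicitly as
\[
L(\sigma) = \sum_{i=1}^{N} w_{\sigma(i)} \sum_{j < i} t_{\sigma(j)},
\]
and observe that it is a finite-domain objective over the symmetric group $S_N$, so a minimum exists. The strategy is then: take any $\sigma$ that is not sorted in non-ascending order of $v_i$, exhibit a modification that strictly decreases $L(\sigma)$, and thus conclude that every minimizer is so sorted.

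The key step is an adjacent swap lemma. Suppose in positions $k$ and $k+1$ we have participants $a = \sigma(k)$ and $b = \sigma(k+1)$ with $v_a < v_b$, i.e.\ $w_a t_b < w_b t_a$. Let $P = \sum_{j<k} t_{\sigma(j)}$ denote the waiting time accumulated before position $k$. I would then note that swapping $a$ and $b$ leaves the waiting time of every other participant unchanged: participants before position $k$ are untouched, and participants in positions $\geq k+2$ still see the same cumulative preceding service time $P + t_a + t_b$. Hence the change in $L$ comes only from the pair $\{a,b\}$, and a one-line calculation gives
\[
L(\sigma') - L(\sigma) = (w_a t_b + w_b \cdot 0) - (w_a \cdot 0 + w_b t_a) = w_a t_b - w_b t_a < 0.
\]
So the swap strictly decreases the total loss.

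From here the lemma follows by a bubble-sort argument: starting from any $\sigma$ not sorted non-ascendingly by $v_i$, some adjacent pair is out of order and can be swapped to reduce $L$; iterating, we reach the sorted permutation with strictly smaller (or equal) loss, so the sorted order is a minimizer. I do not expect any real obstacle — the only mild subtlety is handling possible ties $v_a = v_b$, where the same swap argument shows the loss is unchanged, so any tie-breaking among equal-$v_i$ participants yields an equally optimal arrangement; this should be mentioned in one sentence so that the phrase \emph{non-ascending order} is unambiguous.
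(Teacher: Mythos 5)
Your proposal is correct and follows essentially the same route as the paper: an adjacent-exchange (Smith's rule) argument showing that swapping an out-of-order adjacent pair with $v_a < v_b$ changes the loss by $w_a t_b - w_b t_a < 0$. Your write-up is in fact more complete than the paper's, which omits the bubble-sort iteration and the tie case, but the key idea is identical.
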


\begin{proof}
First, notice that the total loss of the participants is $\Sigma_{j < i} v_{\sigma(i)} t_{\sigma(j)}$.

Suppose $v_{\sigma(i)} < v_{\sigma(i+1)}$. Then $w_{\sigma(i + 1)} t_{\sigma(i)} > w_{\sigma(i)} t_{\sigma(i + 1)}$. And swapping the places of the $\Sigma(i)$-th and the $\Sigma(i + 1)$ customers will increase the overall sum.
\end{proof}

\section{VCG queuetions}

\begin{definition}[VCG queuetion]
The participants place bids $b_j$ and then are given places $\sigma^{-1}(j)$ so, that $b_{\sigma(i)}$ is non-increasing. 
Then $j$-th participant pays $$t_\sigma(i) \Sigma_{j > i} t_{\sigma(j)} b_{\sigma(j)}$$

That is exactly what all other participants lose from their presence.
\end{definition}

The total loss of the participant on $j$-th slot is $$t_\sigma(i) (v_\sigma(i) \Sigma_{j < i} t_\sigma(j) + \Sigma_{j > i} t_{\sigma(j)} b_{\sigma(j)})$$ and the profit of the organizer is $$\Sigma_{j > i} t_\sigma(i) t_{\sigma(j)} b_{\sigma(j)}$$

\begin{lemma}
The combination of bids forms a Nash Equilibrium for VCG auction if and only if

$$v_{\sigma(k+1)} \leq b_{\sigma(k)} \leq v_{\sigma(k-1)}$$ 
\end{lemma}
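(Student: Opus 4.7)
The plan is to extract both directions of the equivalence from a single computation: the change in the total loss of participant $\sigma(k)$ when they alter their bid so as to land in a different slot. Because the VCG payment depends only on the positions and on the fixed bids of the other participants, the optimal deviation of $\sigma(k)$ is entirely determined by the new slot they end up occupying, so one may parametrize deviations by the target position $k'$ rather than by the raw bid.

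For the necessity direction, I would check the two minimal deviations. If $\sigma(k)$ drops their bid to just below $b_{\sigma(k+1)}$, they swap slots with $\sigma(k+1)$ and no other participant is displaced. Plugging this into the loss formula displayed just before the statement, every term involving positions $j<k$ or $j>k+1$ cancels, and the change in $\sigma(k)$'s total loss collapses to $t_{\sigma(k)} t_{\sigma(k+1)}\bigl(v_{\sigma(k)} - b_{\sigma(k+1)}\bigr)$. Nash equilibrium forces this quantity to be nonnegative, giving $b_{\sigma(k+1)} \leq v_{\sigma(k)}$. Raising the bid of $\sigma(k)$ just above $b_{\sigma(k-1)}$ yields the symmetric computation $t_{\sigma(k)} t_{\sigma(k-1)}\bigl(b_{\sigma(k-1)} - v_{\sigma(k)}\bigr) \geq 0$, and hence $b_{\sigma(k-1)} \geq v_{\sigma(k)}$. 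After reindexing, these two inequalities are exactly the two halves of the claimed chain.

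For the sufficiency direction, I would take an arbitrary deviation of $\sigma(k)$ to a new slot $k'$, without loss of generality $k'>k$. Running the same bookkeeping, now telescoped over the $k'-k$ slots that $\sigma(k)$ skips past, the net change in loss becomes $t_{\sigma(k)}\sum_{j=k+1}^{k'} t_{\sigma(j)}\bigl(v_{\sigma(k)} - b_{\sigma(j)}\bigr)$. Since each $t_{\sigma(j)}>0$, it suffices to check $b_{\sigma(j)}\leq v_{\sigma(k)}$ for every $j\geq k+1$, and this is where the monotonicity $b_{\sigma(k+1)} \geq b_{\sigma(k+2)} \geq \cdots$ enforced by the allocation rule meets the assumed chain: $b_{\sigma(j)} \leq b_{\sigma(k+1)} \leq v_{\sigma(k)}$. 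The case $k'<k$ is symmetric. The main obstacle is precisely this last step, namely verifying that local optimality (no single-step swap is profitable) implies global optimality (no multi-step jump is profitable); without the built-in monotonicity of bids along the sorted slots the telescoping argument would fail, so it is essential to point out exactly where the mechanism's sorting rule is doing the work.
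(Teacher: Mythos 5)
Your proof is correct and follows essentially the same route as the paper's: necessity is extracted from the two adjacent-swap deviations, and sufficiency reduces an arbitrary jump to the per-slot terms $t_{\sigma(k)}t_{\sigma(j)}\bigl(v_{\sigma(k)}-b_{\sigma(j)}\bigr)$, whose signs are controlled by the sortedness of the bids together with the assumed chain. The paper phrases the sufficiency step as a bound on weighted averages of the intervening bids rather than a termwise bound, but the monotonicity of $b_{\sigma(\cdot)}$ along the slots is doing exactly the same work in both arguments.
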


\begin{proof}
    One can notice, that if the participant on $k$-th slot changes their bid to move to the $j$-th slot, then their profit will change to $$t_{\sigma(k)} (v_{\sigma(k)} \Sigma_{i < j} t_{\sigma(i)} + \Sigma_{i \geq j; i \neq k} t_{\sigma(i)} b_{\sigma(i)})$$ in case $j < k$ or to $t_{\sigma(k)} (v_\sigma(k) \Sigma_{i \leq j; i \neq k} t_\sigma(i) + \Sigma_{i > j} t_{\sigma(i)} b_{\sigma(i)})$ otherwise.

    Therefore, a sequence of bids $b_1, …, b_N$ forms a Nash equilibrium if and only if

    $$v_\sigma(k) \Sigma_{j \leq i < k} t_\sigma(i) \leq \Sigma_{j \leq i < k} t_{\sigma(i)} b_{\sigma(i)}$$

    for all $j < k$ and 

    $$v_\sigma(k) \Sigma_{k < i \leq j} t_\sigma(i) \geq \Sigma_{k < i \leq j} t_{\sigma(i)} b_{\sigma(i)}$$

    for $k < j$

    From this it follows that for every $k$ it holds that

    $$max_{j} \frac{\Sigma_{j \leq i < k} t_{\sigma(i)} b_{\sigma(i)}}{\Sigma_{j \leq i < j} t_\sigma(i)} \leq v_\sigma(k) \leq min_{j} \frac{\Sigma_{k < i \leq j} t_{\sigma(i)} b_{\sigma(i)}}{\Sigma_{k < i \leq j} t_\sigma(i)}$$

    In particular, 

    $$b_{\sigma(k+1)} \leq v_\sigma(k) \leq b_{\sigma(k-1)}$$

    On the other hand as $$b_{\sigma(j)} \geq \frac{\Sigma_{j \leq i < k} t_{\sigma(i)} b_{\sigma(i)}}{\sigma_{j \leq i < j} t_\sigma(i)} \leq b_{\sigma(k-1)}$$ due to the bids being well-ordered, we see that the condition

    $$b_{\sigma(k+1)} \leq v_\sigma(k) \leq b_{\sigma(k-1)}$$

    is sufficient.
\end{proof}

\begin{theorem}
In any Nash equilibrium for VCG $v_{\sigma(i)} \geq v_{\sigma(j)}$ for all $j > i+1$
\end{theorem}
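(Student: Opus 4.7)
The plan is to combine the equilibrium characterization from the preceding lemma with the non-increasing ordering of the bids (which is built into the VCG queuetion allocation rule). The previous lemma says that at equilibrium
\[
v_{\sigma(k+1)} \leq b_{\sigma(k)} \leq v_{\sigma(k-1)}
\]
for every interior slot $k$. First I would specialise this twice: once at $k = i+1$ to get $v_{\sigma(i)} \geq b_{\sigma(i+1)}$, and once at $k = j-1$ (which is a valid index precisely because $j \geq i+2 > i+1$) to get $b_{\sigma(j-1)} \geq v_{\sigma(j)}$.

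Next I would invoke the fact that in the VCG queuetion the participants are sorted so that the bids $b_{\sigma(1)}, b_{\sigma(2)}, \dots$ are non-increasing; this is part of the definition of the mechanism, not an extra assumption. Since $i+1 \leq j-1$, this monotonicity gives $b_{\sigma(i+1)} \geq b_{\sigma(j-1)}$. Chaining the three inequalities yields
\[
v_{\sigma(i)} \;\geq\; b_{\sigma(i+1)} \;\geq\; b_{\sigma(j-1)} \;\geq\; v_{\sigma(j)},
\]
which is exactly the claim.

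I would then briefly comment on why the hypothesis $j > i+1$ (rather than $j > i$) is sharp: the equilibrium bracket only relates $v_{\sigma(k)}$ to the bids $b_{\sigma(k\pm 1)}$, not to the adjacent valuations themselves, so one genuinely needs a gap of at least one slot to sandwich a bid between the two valuations and exploit monotonicity. The only part requiring any care is checking that the indices $i+1$ and $j-1$ are both in the interior range where the lemma applies; this is automatic from $1 \leq i$ and $j \leq N$ together with $j \geq i+2$, so there is no real obstacle. The argument is essentially a three-term inequality chain.
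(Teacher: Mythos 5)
Your proof is correct and is essentially identical to the paper's own argument, which consists of exactly the same three-term chain $v_{\sigma(i)} \geq b_{\sigma(i+1)} \geq b_{\sigma(j-1)} \geq v_{\sigma(j)}$. You simply make explicit the index bookkeeping and the appeal to the non-increasing ordering of bids that the paper leaves implicit.
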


\begin{proof}
    $$v_{\sigma(i)} \geq b_{\sigma(i+1)} \geq b_{\sigma(j-1)} \geq v_{\sigma(j)}$$
\end{proof}

This allows to calculate the exact lower bound for the organizers revenue in $O(N^2)$ via dynamic programming. For the exact upper bound we have an exact formula, which can be calculated in $O(N)$:

\begin{theorem}
    The maximum possible organizer's revenue in a Nash equilibrium in $VCG$ is equal to $$\Sigma_{j > i} t_\sigma(i) t_{\sigma(j)} v_{\sigma(j)}$$.
\end{theorem}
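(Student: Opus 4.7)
The plan is to exploit the complete characterization of Nash equilibria given by the preceding lemma, namely $v_{\sigma(k+1)} \leq b_{\sigma(k)} \leq v_{\sigma(k-1)}$, and to optimize the revenue functional over the resulting polytope of feasible bid profiles. Aggregating the per-slot payment formula over all slots yields a total organizer's revenue of the form $\sum_{i<j} t_{\sigma(i)} t_{\sigma(j)} b_{\sigma(j)}$, which is linear and monotone nondecreasing in each bid $b_{\sigma(j)}$ with $j \geq 2$, and does not depend on $b_{\sigma(1)}$ at all.

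First I would maximize by setting each bid to its equilibrium ceiling, i.e.\ put $b_{\sigma(k)} := v_{\sigma(k-1)}$ for $k \geq 2$, leaving $b_{\sigma(1)}$ free and chosen large enough. Next I would verify the joint feasibility of this ceiling assignment: the bids must remain non-increasing in $k$ so that $\sigma$ stays the induced ordering, and the lower-bound part of the equilibrium condition, $b_{\sigma(k)} \geq v_{\sigma(k+1)}$, must hold simultaneously. Both requirements reduce to the inequality $v_{\sigma(k-1)} \geq v_{\sigma(k+1)}$, which is precisely the distance-two monotonicity established in the preceding theorem.

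Finally I would substitute $b_{\sigma(j)} = v_{\sigma(j-1)}$ into the revenue expression and reindex the outer sum by shifting $j \mapsto j+1$, thereby matching the stated closed form. The boundary contributions vanish because $b_{\sigma(1)}$ does not enter the sum and the shifted range runs out at the bottom of the queue. I expect the main obstacle to be exactly this bookkeeping together with the feasibility verification: the preceding theorem only guarantees monotonicity of $v_{\sigma}$ at distance at least two, so adjacent values $v_{\sigma(j)}$ and $v_{\sigma(j+1)}$ need not be comparable, yet the ceiling assignment must still produce a globally non-increasing bid sequence consistent with the ordering $\sigma$ — so one has to argue carefully that no two adjacent ceilings can cross.
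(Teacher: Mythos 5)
Your overall strategy is the same as the paper's: fix the ordering $\sigma$, push each bid to its equilibrium ceiling $b_{\sigma(k)}=v_{\sigma(k-1)}$, and appeal to the monotonicity of the revenue $\sum_{i<j} t_{\sigma(i)}t_{\sigma(j)}b_{\sigma(j)}$ in each bid. You are in fact more careful than the paper about feasibility, and the worry you flag at the end is genuine and is not addressed in the source either: the ceiling profile is non-increasing in $k$ only if $v_{\sigma(k-1)}\geq v_{\sigma(k)}$ for every $k$, i.e.\ only if the $v$'s are fully sorted along $\sigma$, whereas the equilibrium characterization and the preceding theorem give monotonicity only at distance two. So your claim that ``both requirements reduce to $v_{\sigma(k-1)}\geq v_{\sigma(k+1)}$'' is not right for the ordering requirement; you would need either to restrict to the efficient (fully sorted) ordering, where the ceiling profile really is an equilibrium, or to supply a separate argument. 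As written, this remains an open step in your proof, just as it is in the paper's.

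The step that actually fails is the final substitution. Plugging $b_{\sigma(j)}=v_{\sigma(j-1)}$ into the revenue gives $\sum_{j>i} t_{\sigma(i)}t_{\sigma(j)}v_{\sigma(j-1)}$, and the shift $j\mapsto j+1$ turns this into $\sum_{j\geq i} t_{\sigma(i)}t_{\sigma(j+1)}v_{\sigma(j)}$: the $t$-factor shifts along with the $v$-factor, so the sum does not collapse to the stated $\sum_{j>i} t_{\sigma(i)}t_{\sigma(j)}v_{\sigma(j)}$. The stated closed form is instead the revenue under truthful bids $b_{\sigma(j)}=v_{\sigma(j)}$, which (when the $v$'s are sorted along $\sigma$) is dominated by the ceiling profile, so the reindexing cannot be repaired. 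The paper's own proof performs the identical silent substitution and asserts the same mismatched formula, so this defect is inherited from the source; but the bookkeeping you predicted would be the main obstacle is exactly where your argument, and the paper's, breaks down.
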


\begin{proof}
Let's construct the maximum possible combination of bids that forms a Nash equilibrium.

By Lemma 1 for a fixed $\sigma$ it is achieved exactly when 

$$b_{\sigma(k)} = v_{\sigma(k - 1)}$$

Because the organizer's revenue monothonously depends on bids, their revenue with this set of bids will be maximal. And it will be equal to

$$\Sigma_{j > i} t_\sigma(i) t_{\sigma(j)} v_{\sigma(j)}$$
\end{proof}

\section{Generalized second price auctions with preference}

Consider yet another type of rules:

\begin{definition}[GSP positional auction with preference]
The participants place bids $B_j$ and then are given places $\sigma^{-1}(j)$ so, that $B_{\sigma(i)}$ is non-increasing. 
Then $i$-th participant pays $B_{\sigma(i+1)} t_i$.
\end{definition}

The profit of $j$-th participant will be $$t_\sigma(i) (v_\sigma(i) \Sigma_{j < i} t_\sigma(j) + B_i$$ and the profit of the organizer will be $\sum_{i = 1}^N B_{\sigma(i + 1)} t_i$ — the same as in VCG.

\begin{lemma}
The combination of bids forms a Nash Equilibrium for GSP auction in and only if for all $j < k$ holds:

$$v_{\sigma(k-1)}(\Sigma_{j \leq i < k-1} t_\sigma(i)) \leq B_{\sigma(j)} - B_{\sigma(k)} \leq v_{\sigma(k-1)} (\Sigma_{j-1 \leq i < k-1} t_\sigma(i))$$

\end{lemma}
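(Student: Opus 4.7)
The plan is to characterize Nash equilibrium by writing down the no-profitable-deviation condition for every participant against every alternative slot, and then collapsing the resulting system into the single two-sided inequality in the statement.

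I would start by recording the current loss of the participant at slot $k-1$, which by the payment rule is $t_{\sigma(k-1)}\bigl(v_{\sigma(k-1)} \sum_{i<k-1} t_{\sigma(i)} + B_{\sigma(k)}\bigr)$. For a deviation that lands the same participant at slot $j<k-1$, the players originally at slots $j, j+1, \ldots, k-2$ shift down by one, and the new loss becomes $t_{\sigma(k-1)}\bigl(v_{\sigma(k-1)} \sum_{i<j} t_{\sigma(i)} + B_{\sigma(j)}\bigr)$, the payment being set by the new immediate successor, which is the original $\sigma(j)$. The no-upward-deviation inequality then reduces, after cancellation of the common waiting-time term, to the lower bound $B_{\sigma(j)} - B_{\sigma(k)} \ge v_{\sigma(k-1)} \sum_{j \le i < k-1} t_{\sigma(i)}$, which is the left half of the stated inequality. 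I would then run the analogous argument for the corresponding downward deviation; the fact that the "successor" in a downward move shifts by one further position produces the extra summand $t_{\sigma(j-1)}$ and delivers the upper bound $B_{\sigma(j)} - B_{\sigma(k)} \le v_{\sigma(k-1)} \sum_{j-1 \le i < k-1} t_{\sigma(i)}$.

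For the converse direction, I would adapt the "it suffices to check adjacent slots" argument used in the proof of Lemma~2: the non-increasing bid order, together with the telescoping structure of both the bid-difference side and the $t_{\sigma(i)}$-sum side, means that a deviation to a slot several positions away can be decomposed into a chain of adjacent single-slot steps, each of which is ruled out by the stated bounds. The monotonicity $B_{\sigma(j)} \ge B_{\sigma(k)}$ enforced by the mechanism's sorting rule is precisely what makes this telescoping honest.

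The main obstacle I anticipate is not any substantive step but index bookkeeping. Each choice of deviator and target slot shifts the "successor" whose bid determines the payment, which in turn shifts both the relevant bid and the index range of the time-sum. Aligning every necessary inequality into the canonical form $v_{\sigma(k-1)} \sum \cdots \le B_{\sigma(j)} - B_{\sigma(k)} \le v_{\sigma(k-1)} \sum \cdots$ with the exact ranges in the statement, while keeping track of which inequality came from an upward versus a downward move, is where most of the care is required.
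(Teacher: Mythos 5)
Your strategy---enumerate the no-deviation condition for every deviator/target pair and reindex---is the same route the paper takes, and your left inequality is derived correctly: the occupant of slot $k-1$ moving up to slot $j$ pays $B_{\sigma(j)}t_{\sigma(k-1)}$ after the shift, and cancelling the common terms gives $v_{\sigma(k-1)}\sum_{j\le i<k-1}t_{\sigma(i)}\le B_{\sigma(j)}-B_{\sigma(k)}$. The gap is the right inequality, which you assert rather than compute. The relevant downward deviation is the occupant of slot $j-1$ (value $v_{\sigma(j-1)}$) moving down to slot $k-1$, where their payment is then governed by the original occupant of slot $k$; writing that out gives
$$B_{\sigma(j)}-B_{\sigma(k)}\;\le\; v_{\sigma(j-1)}\sum_{j\le i\le k-1}t_{\sigma(i)},$$
whose coefficient is the deviator's value $v_{\sigma(j-1)}$, not $v_{\sigma(k-1)}$, and whose extra summand relative to the left-hand range is $t_{\sigma(k-1)}$ appended at the top, not $t_{\sigma(j-1)}$ at the bottom. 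No choice of deviator yields an upper bound on $B_{\sigma(j)}-B_{\sigma(k)}$ with coefficient $v_{\sigma(k-1)}$: in any ``do not move down'' constraint the value that appears belongs to the participant who moves, i.e.\ to a slot above $j$. So your plan cannot deliver the inequality exactly as stated; the index bookkeeping you flag as the ``main obstacle'' is in fact where the computation and the statement part ways (the paper's own proof does not resolve this either---it reindexes a max/min display whose inequality directions are already inconsistent with the deviation analysis).

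Separately, the sufficiency argument you sketch is both unnecessary and unsound. Unnecessary, because the stated family already quantifies over all pairs $j<k$; once you show it is a reindexing of the full set of no-deviation constraints, sufficiency is automatic and there is nothing left to decompose. Unsound, because a long-range deviation by the occupant of slot $k-1$ cannot be reduced to a chain of adjacent single-slot deviations: the adjacent constraints for the intermediate slots carry the intermediate participants' values $v_{\sigma(k-2)},v_{\sigma(k-3)},\dots$, which need not dominate $v_{\sigma(k-1)}$, so the chain does not telescope. This is precisely the respect in which GSP differs from the VCG case of Lemma~2, where the equilibrium conditions do collapse to adjacent comparisons.
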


\begin{proof}
One can notice, that if the participant on $k$-th slot changes their bid to move to the $j$-th slot, then their profit will change to $$t_{\sigma(k)} (v_{\sigma(k)} \Sigma_{i < j} t_{\sigma(i)} + \Sigma_{i \geq j; i \neq k} t_{\sigma(i)} b_{\sigma(i)})$$ in case $j < k$ or to $t_{\sigma(k)} (v_\sigma(k) \Sigma_{i \leq j; i \neq k} t_\sigma(i) + \Sigma_{i > j} t_{\sigma(i)} b_{\sigma(i)})$ otherwise.

    Therefore, a sequence of bids $b_1, …, b_N$ forms a Nash equilibrium if and only if

    $$v_\sigma(k) \Sigma_{j \leq i < k} t_\sigma(i) \leq \Sigma_{j \leq i < k} t_{\sigma(i)} b_{\sigma(i)}$$

    for all $j < k$ and 

    $$v_\sigma(k) \Sigma_{k < i \leq j} t_\sigma(i) \geq \Sigma_{k < i \leq j} t_{\sigma(i)} b_{\sigma(i)}$$

    for $k < j$

    From this it follows that for every $k$ it holds that

    One can notice, that if the participant on $k$-th slot changes their bid to move to the $j$-th slot, then their profit will change to $$t_{\sigma(k)} (v_{\sigma(k)} \Sigma_{i < j} t_{\sigma(i)} + \Sigma_{i \geq j; i \neq k} t_{\sigma(i)} b_{\sigma(i)})$$ in case $j < k$ or to $t_{\sigma(k)} (v_\sigma(k) \Sigma_{i \leq j; i \neq k} t_\sigma(i) + \Sigma_{i > j} t_{\sigma(i)} b_{\sigma(i)})$ otherwise.

    Therefore, a sequence of bids $b_1, …, b_N$ forms a Nash equilibrium if and only if

    $$v_\sigma(k) \Sigma_{j \leq i < k} t_\sigma(i) \leq \Sigma_{j \leq i < k} t_{\sigma(i)} b_{\sigma(i)}$$

    for all $j < k$ and 

    $$v_\sigma(k) \Sigma_{k < i \leq j} t_\sigma(i) \geq \Sigma_{k < i \leq j} t_{\sigma(i)} b_{\sigma(i)}$$

    for $k < j$

    From this it follows that for every $k$ it holds that

     $$max_{j} \frac{B_{\sigma(j)} - B_{\sigma(k + 1)}}{\Sigma_{j \leq i < k} t_\sigma(i)} \leq v_\sigma(k) \leq min_{j} \frac{B_{\sigma(k + 1)} - B_{\sigma(j + 1)}}{\Sigma_{k < i \leq j} t_\sigma(i)}$$

     That means, for all $j < k$ we have 
     $$v_{\sigma(k-1)}(\Sigma_{j \leq i < k-1} t_\sigma(i)) \leq B_{\sigma(j)} - B_{\sigma(k)} \leq v_{\sigma(k-1)} (\Sigma_{j-1 \leq i < k-1} t_\sigma(i))$$

\end{proof}

\begin{theorem}
The maximal possible revenue of the organizer in case when all $v_{\sigma(i)}$-s are sorted in non-descending order equals to equals to $\sum_{j \leq i} t_{\sigma(j)} t_\sigma(i) v_{\sigma(i)}$ 
\end{theorem}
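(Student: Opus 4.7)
The plan is to construct, explicitly, the revenue-maximising bid profile among all Nash equilibria compatible with the given ordering $\sigma$, using the previous lemma as the feasibility description, and then evaluate the revenue in closed form.

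First, the organizer's revenue $\sum_{k=1}^{N-1} B_{\sigma(k+1)} t_{\sigma(k)}$ is strictly increasing in each trailing bid $B_{\sigma(k)}$ with $k \geq 2$ and does not involve $B_{\sigma(1)}$, so the maximum is attained when every trailing bid sits at the supremum of its lemma-prescribed interval. I would then specialise the lemma's upper bound to consecutive ranks, taking $j = k - 1$, which gives a one-step recursion of the form $B_{\sigma(k-1)} - B_{\sigma(k)} \leq v_{\sigma(k-1)}\, t_{\sigma(k-2)}$. Starting from $B_{\sigma(N+1)} = 0$ and saturating these consecutive inequalities from the back of the queue, a telescoping argument gives each trailing bid $B_{\sigma(k+1)}$ an explicit closed form as a partial sum over later-ranked indices. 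Substituting back into the revenue expression and swapping the order of the resulting double sum should then reproduce the claimed formula $\sum_{j \leq i} t_{\sigma(j)} t_{\sigma(i)} v_{\sigma(i)}$.

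The main obstacle is checking that the candidate profile, built only from the consecutive-pair upper bounds, satisfies every non-adjacent Nash inequality of the lemma as well. This is where the hypothesis that the $v_{\sigma(i)}$ form a non-descending sequence does its work: it controls the relative sizes of the $v_{\sigma(m)}$ that appear as coefficients in each telescoped difference $B_{\sigma(j)} - B_{\sigma(k)}$, so that the lemma's lower bound $v_{\sigma(k-1)} \sum_{j \leq i < k-1} t_{\sigma(i)}$ is preserved at every pair $(j, k)$ with $k - j \geq 2$. Without this monotone assumption the tight-upper-bound construction may violate a non-adjacent lower bound, and the closed form would only upper-bound the revenue rather than be attained. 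After dispatching this verification and choosing $B_{\sigma(1)}$ large enough to satisfy the no-deviation conditions for the first participant, the construction yields an explicit revenue-maximising equilibrium, and its revenue evaluates to the stated formula.
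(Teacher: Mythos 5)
Your proposal follows essentially the same route as the paper: both saturate the upper bounds on the bids supplied by the Nash-equilibrium characterization lemma, use the non-descending order of the $v_{\sigma(i)}$ to argue that the saturated (telescoped) profile still satisfies all the equilibrium constraints, and conclude by monotonicity of the organizer's revenue in the bids. Your write-up is in fact more explicit than the paper's about the one step the paper glosses over, namely verifying the non-adjacent inequalities for the tight profile.
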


\begin{proof}
     $$B_{\sigma(j)} - B_{\sigma(k)} \leq v_{\sigma(k-1)} (\Sigma_{j-1 \leq i < k-1} t_\sigma(i)) \leq (\Sigma_{j-1 \leq i < k-1} t_\sigma(i) v_{\sigma(k-1)})$$

     From that it follows that $B_{\sigma(j)} \leq (\Sigma_{j-1 \leq i} t_\sigma(i) v_{\sigma(k-1)})$, and that $B_{\sigma(j)} = (\Sigma_{j-1 \leq i} t_\sigma(i) v_{\sigma(k-1)})$ satisfies the Nash Equilibrium conditions.

     Because the organizer's revenue monothonously depends on bids, their revenue with this set of bids will be maximal. And it will be equal to

     $$\Sigma_{j > i} t_\sigma(i) t_{\sigma(j)} v_{\sigma(j)}$$
\end{proof}

%
%
\bibliographystyle{splncs04}
\bibliography{mybibliography}

\end{document}